\newcommand{\vc}{{\mathbf{c}}}
\newcommand{\vd}{{\mathbf{d}}}
\newcommand{\vx}{{\mathbf{x}}}
\newcommand{\vy}{{\mathbf{y}}}
\newcommand{\vz}{{\mathbf{z}}}
\newcommand{\cPZ}{\mathcal{PZ}}
\newcommand{\cP}{\mathcal{Z}}
\DeclareMathOperator*{\argmin}{arg\,min}
\begin{document}
\title{On the Difficulty of Intersection Checking with Polynomial Zonotopes}
%
%
\author{Yushen Huang\inst{1} \and
Ertai Luo\inst{1} \and
Stanley Bak\inst{1} \and
Yifan Sun\inst{1}}
\authorrunning{Y. Huang et al.}
%
\institute{Stony Brook University, Stony Brook NY 11790, USA \\
\email{\{yushen.huang,ertai.luo,stanley.bak,yifan.sun\}@stonybrook.edu}}
\maketitle              
\begin{abstract}
Polynomial zonotopes, a non-convex set representation, have a wide range of applications from real-time motion planning and control in robotics, to reachability analysis of nonlinear systems and safety shielding in reinforcement learning.
Despite this widespread use, a frequently overlooked difficulty associated with polynomial zonotopes is intersection checking. 
Determining whether the reachable set, represented as a polynomial zonotope, intersects an unsafe set is not straightforward.
In fact, we show that this fundamental operation is NP-hard, even for a simple class of polynomial zonotopes.

\vspace{1em}
The standard method for intersection checking with polynomial zonotopes is a two-part algorithm that overapproximates a polynomial zonotope with a regular zonotope and then, if the overapproximation error is deemed too large, splits the set and recursively tries again.
Beyond the possible need for a large number of splits, we identify two sources of concern related to this algorithm: (1) overapproximating a polynomial zonotope with a zonotope has unbounded error, and (2) after splitting a polynomial zonotope, the overapproximation error can actually increase.
Taken together, this implies there may be a possibility that the algorithm does not always terminate.
We perform a rigorous analysis of the method and detail necessary conditions for the union of overapproximations to provably converge to the original polynomial zonotope.

\end{abstract}
\section{Introduction}
Set-based analysis is the foundation of many formal analysis approaches including abstract interpretation methods for software~\cite{cousot1977abstract} and reachability analysis methods for cyber-physical and hybrid systems~\cite{althoff2021set}.
The usefulness of a set representation is determined by what operations can be efficiently supported.

For safety verification, one fundamental operation is intersection checking; does the set of possible states intersect the set of unsafe states?
In this context, one common way to represent sets is using zonotopes~\cite{eppstein1995zonohedra,girard2005reachability}, which are affine transformations of a unit box. 
Zonotopes offer a compact representation, efficiently encode linear transformations, and support linear-time optimization.
However, zonotopes cannot represent non-convex sets and so are less useful when a nonlinear operation is applied to a set.
In contrast, polynomial zonotopes \cite{althoff2013reachability} are closed under polynomial maps and can therefore exactly represent more complex sets. 
Polynomial zonotopes can be considered as polynomial transformations of a unit box.
The two representations are illustrated in Figure~\ref{fig:Single PZ Filled}.

One drawback of polynomial zonotopes is that intersection checking is significantly more complex than with zonotopes.
Although it is known that the characterization of solutions of general nonlinear equations with box-constrained domains is NP-hard~\cite[Sec. 4.1]{jaulin2001interval}, is the problem easier for polynomial zonotopes, since the transformation is always a polynomial? Would the problem become easier if we only check for halfspace intersections or if we only consider simple polynomials?
In this work we prove that intersection checking is NP-hard for polynomial zonotopes, regardless of such attempts at simplification.

Setting aside the worst-case time complexity, the existing algorithm proposed to check for intersections, as well as perform plotting, is based on a combination of overapproximation using zonotopes and refinement using splitting~\cite{kochdumper2022extensions,bak2022reachability}.
Is this algorithm guaranteed to converge to the true set, even given infinite runtime?
We identify two sources of concern: (i) the overapproximation of a polynomial zonotope with a zonotope can have unbounded error, and (ii) the error of polynomial zonotope overapproximation can actually \emph{increase} after splitting is performed.
This work analyses the proposed algorithm in detail, and derives fairness conditions that are sufficient to prove the algorithm provably converges. 

\begin{figure}[t]
   \centering
   \includegraphics[width=0.75\textwidth]{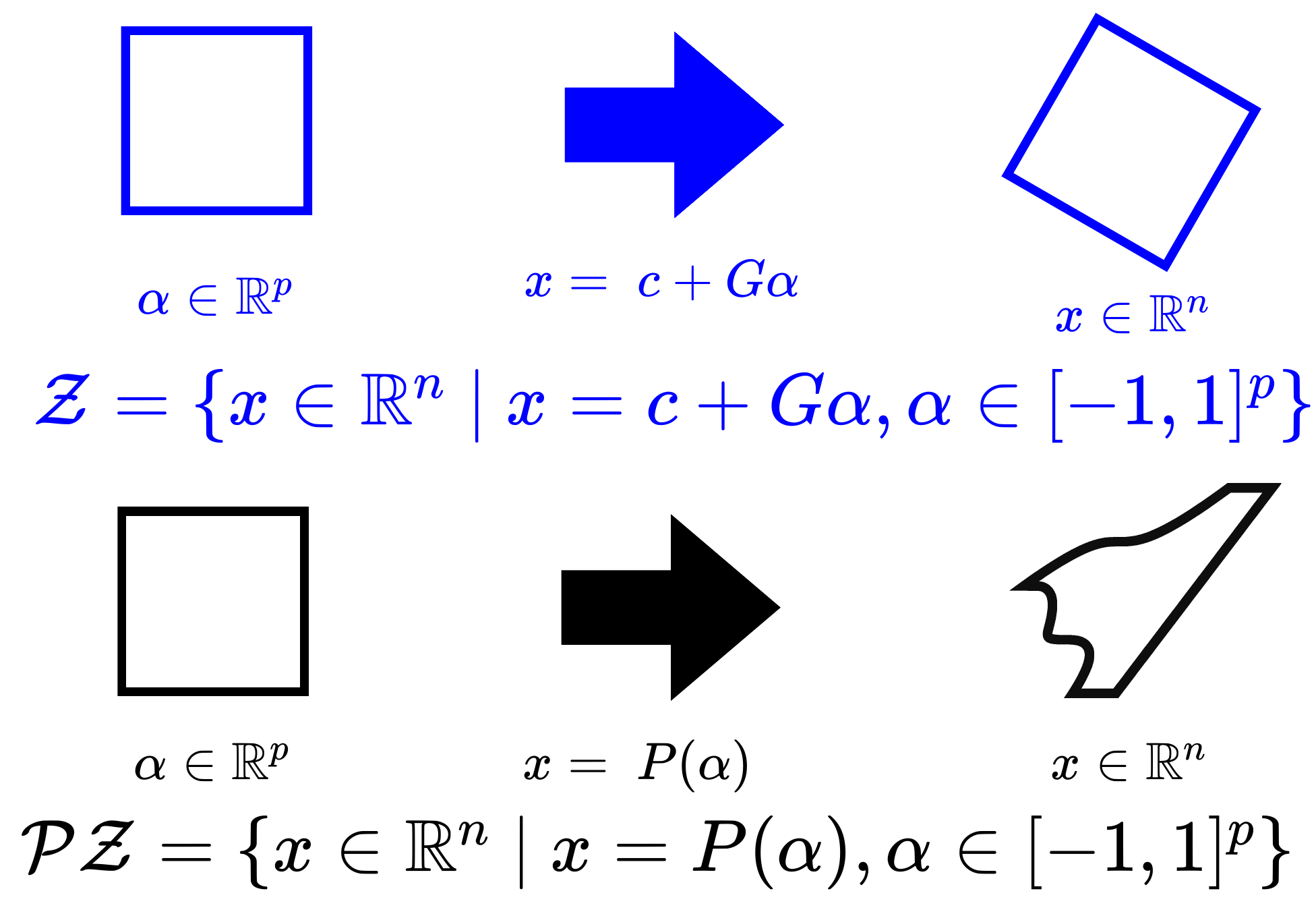}
   \caption{A zonotope (blue, top) is a convex $n$-dimensional set represented as an affine transformation of a unit box in $p$ dimensions. A polynomial zonotope (black, bottom) is a possibly non-convex $n$-dimensional set represented as a polynomial transformation of a unit box in $p$ dimensions ($P(\cdot)$ is a polynomial). 
   }.
   \label{fig:Single PZ Filled}
\end{figure}

\vspace{1em}
\noindent
\textbf{Practical Example.} While the contributions of this work are theoretical in nature, they are grounded in practical issues the authors observed while working with polynomial zonotopes.
Figure~\ref{fig:split_comparison} shows a plot of a 2-d projection of a polynomial zonotope, obtained when computing the reachable set of an uncertain time-varying system~\cite{Luo2023} using the overapproximate and split algorithm from the CORA tool~\cite{althoff2015introduction}.
Splitting seems to have diminishing returns, as the light gray overapproximation of the polynomial zonotope remains far from the true boundary (red points), even when the algorithm runs for over an hour.

\vspace{1em}
\noindent
\textbf{Contributions.} The key contributions of this paper are:
\begin{itemize}
    \item[$\bullet$] We prove that polynomial zonotope intersection checking
    is NP-hard, even for simple halfspace constraints and bilinear polynomials (Section~\ref{sec:nphard}).
    \item[$\bullet$] We review the standard intersection-checking algorithm, and demonstrate two sources of concern, that 
        overapproximation error is unbounded and that overapproximation error can increase after splitting (Section~\ref{sec:alg_concerns}). 
    \item[$\bullet$] We provide conditions where the polynomial zonotope refinement algorithm provably converges to the original polynomial zonotope (Section~\ref{sec:convergence}).
\end{itemize}

First, we review preliminaries and formally define zonotopes and polynomial zonotopes in Section~\ref{sec:prelim}.

\begin{figure}[t]
    \centering
    \begin{subfigure}{0.49\textwidth}
        \centering
        \includegraphics[width=\linewidth]{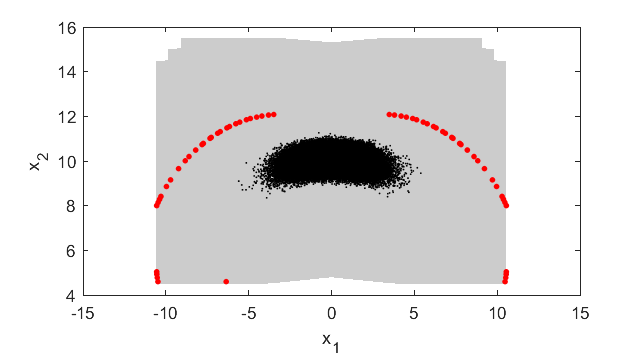}
        \caption{2 splits (6 sets), 0.02 seconds}
    \end{subfigure}%
    \hfill
    \begin{subfigure}{0.49\textwidth}
        \centering
        \includegraphics[width=\linewidth]{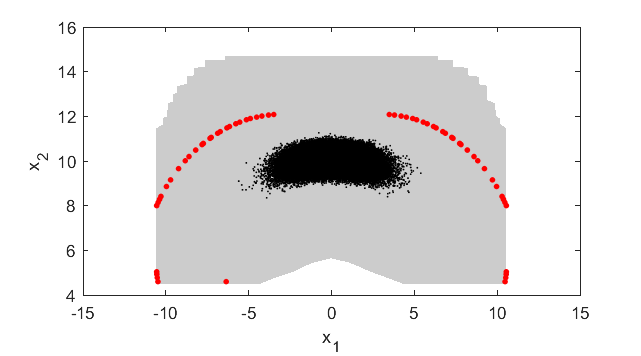}
        \caption{10 splits (476 sets), 1.2 seconds}
    \end{subfigure}
    \vskip\baselineskip
    \begin{subfigure}{0.49\textwidth}
        \centering
        \includegraphics[width=\linewidth]{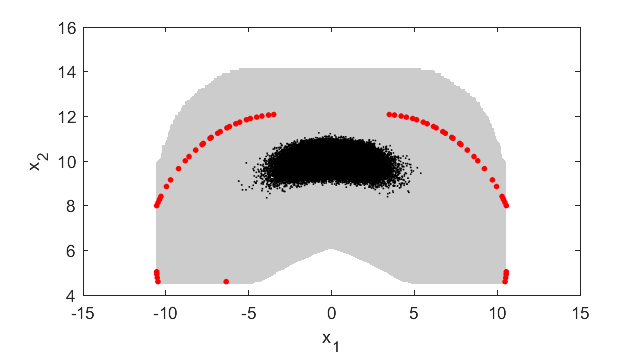}
        \caption{20 splits (14K sets), 51 seconds}
    \end{subfigure}%
    \hfill
    \begin{subfigure}{0.49\textwidth}
        \centering
        \includegraphics[width=\linewidth]{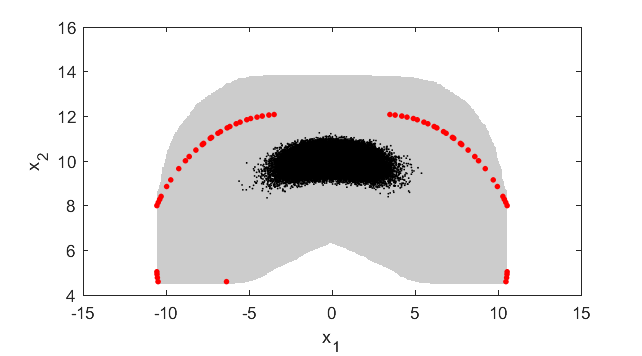}
        \caption{40 splits (310K sets), 1.2 hours}
    \end{subfigure}
    \caption{
    Plotting a polynomial zonotope using the overapproximate and split algorithm (light gray) does not converge to the true set even after an hour of computation time.
    The red dots are the true boundary points and the black dots are random samples.}
    \label{fig:split_comparison}
\end{figure}

\section{Preliminaries}
\label{sec:prelim}

\textbf{Notation.} The set $\mathbb{R}^n$ is an $n$-dimensional real space and $\mathbb{Z}_{\geq 0}$ is the set of all non-negative integers. 
Given a matrix $A \in \mathbb{R}^{n \times m}$, let $A(i,\cdot)$ be the $i$-th row of the matrix and $A(\cdot,j)$ be the $j$-th column of the matrix.
Given vector $\vx \in \mathbb{R}^n$, the $i$-th component of the vector is referred to as $x_i$ and the (one-) norm of the vector is $\Vert \vx \Vert = \sum_{i=1}^{n} \vert x_i \vert $. 
Given a set $S \subseteq \{1,2,\cdots, m \}$, we denote $A(\cdot, S)$ as the matrix consisting of the row index belonging to $S$. For example if $S = \{1,3\}$, then $A(\cdot,S) = \begin{bmatrix}
    A(\cdot,1) & A(\cdot,3)
\end{bmatrix}$. Similarly, given a set $S \subseteq \{1,2,\cdots, n \}$, we denote $A(S,\cdot)$ as the matrix that consists of the column index belong to $S$. 
We call the $n \times n$ identity matrix $I_n$. 
Given two sets $A$ and $B$, the Minkowsiki sum of is written as $A \oplus B = \big \{ \vz ~\big |~ \vz = \vx + \vy, \ \vx \in A, \ \vy \in B \big \}$.

\vspace{1em}
\noindent
We start by defining zonotopes and polynomial zonotopes more formally.


\begin{definition}[Zonotope] 
Given a center $\vc \in \mathbb{R}^n$ and generator matrix $G \in \mathbb{R}^{n \times p} $, a zonotope is the set
\begin{align*}
    \cP = \Bigg \{ \vc + \sum_{j=1}^{p} \alpha _jG(\cdot, i) ~ \bigg|~  \alpha_j \in [-1,1] \Bigg \}.
\end{align*}
\end{definition}
We refer to a zonotope using the shorthand notation $\cP = \langle\vc, G\rangle_{\cP}$. 
Note in the illustration in Figure~\ref{fig:Single PZ Filled}, $\alpha$ was a $p$-dimensional point, whereas in the definition we refer to each element as a scalar $\alpha_j$, which we call a \emph{factor}.

As mentioned in the introduction, a polynomial zonotope is a polynomial transformation of a $p$-dimensional unit hypercube.
Following the sparse formulation of polynomial zonotopes~\cite{kochdumper2020sparse}, we explicitly split the factors from the $p$-dimensional point into two sets $\alpha \in \mathbb{R}^r$ and $\beta \in \mathbb{R}^q$, with $p = r + q$. 
The $\beta$ factors are called \emph{independent} and only occur in terms by themselves and with an exponent of one, whereas the $\alpha$ factors are called \emph{dependent} and are allowed to multiply other (dependent) factors within the same term or have higher powers.



\begin{definition}[Polynomial Zonotope]\label{polynomial_zonotope_def}
Given center $c \in \mathbb{R}^n$, dependent factor generator matrix $G_D \in \mathbb{R}^{n \times h}$, independent factor generator matrix $G_I \in \mathbb{R}^{n \times q}$, and exponent matrix $E \in \mathbb{Z}_{\geq 0}^{r \times h}$, a polynomial zonotope is the set:
\begin{equation*}
  	\begin{split}
    \mathcal{PZ} \hspace{-2pt}= \hspace{-2pt}\bigg\{ & c+\hspace{-2pt}\sum _{i=1}^h \bigg( \prod _{k=1}^r \alpha _k ^{E(k,i)} \bigg) G_{D(\cdot,i)}  + \sum _{j=1}^{q} \beta _j\,G_{I(\cdot,j)}~ \bigg|~\alpha_k, \beta_j \in [-1,1] \bigg\}.
    \end{split}
\end{equation*}
\end{definition}

The intuition why we separate the dependent factors from the independent factors is that in this way the polynomial zonotope can always be written as a Minkowski sum of two sets:
 $$
 \cPZ = \cP_{I} \oplus \cPZ_{D}
 $$
where
\begin{align*}
\cP_{I} =& \bigg\{ c + \sum _{j=1}^{q} \beta _j\,G_{I(\cdot,j)} ~\bigg|~ \beta_j \in [-1,1] \bigg\}, \\
%
%
\cPZ_{D} =& \Bigg \{ \sum_{i=1}^{h} \bigg( \prod_{k=1}^{r} \alpha_k^{E(k,i)} \bigg) G_{D(\cdot,i)} ~\bigg|~ \alpha_k \in [-1,1]  \Bigg \}.
\end{align*}
This splits the general polynomial zonotope into the independent part $\cP_{I}$ which is a zonotope, and the polynomial zonotope $\cPZ_D$ which contains only terms where factors multiply each other or have higher powers.
For intersection checking, the complexity arises from the dependent part and so we will often use a form with only dependent terms like $\cPZ_D$.
In this paper, such a polynomial zonotope, with only dependent terms, will be written using the shorthand notation~$\langle G_D, E \rangle_{\cPZ}$.

\begin{example} (Figure~\ref{fig:Single PZ Filled}, bottom)\label{PZ_example}
    Consider a polynomial zonotope defined as:
    $$
        \cPZ = \Bigg \{  \begin{bmatrix}
           4 \\
           4
         \end{bmatrix} + \beta_1 \begin{bmatrix}
           1 \\
           0
         \end{bmatrix} + \alpha_1 \begin{bmatrix}
           2 \\
           0
         \end{bmatrix} + \alpha_2 \begin{bmatrix}
           1 \\
           2
         \end{bmatrix} + \alpha_1^3\alpha_2 \begin{bmatrix}
           2 \\
           2
         \end{bmatrix} ~\bigg|~ \alpha_i, \beta_i \in [-1,1]  \Bigg \}.
    $$
    In this example, $q = 1, r = 2$ and $h = 3$. 
    We can split $\cPZ$ into the Minkowski sum of two sets $\cPZ = \cP_{I} \oplus \cPZ_{D}$, with
     \begin{align*}
         \cP_{I} = \bigg \langle \begin{bmatrix}
             4 \\ 4
         \end{bmatrix}, \begin{bmatrix}
             1 \\ 0
         \end{bmatrix} \bigg \rangle_{\cP} \quad \cPZ_{D} =  \bigg \langle \begin{bmatrix}
            2& 1& 2\\
          0 & 2 & 2
         \end{bmatrix}, \begin{bmatrix}
           1 & 0 &  3\\
            0 & 1 & 1
         \end{bmatrix} \bigg \rangle_{\cPZ}.
     \end{align*}

\end{example}
\section{Intersection Checking is NP-Hard}
\label{sec:nphard}

Given a polynomial zonotope $\cPZ = \cP_{I} \oplus \cPZ_{D}$ and a linear objective direction $\vd$ the \emph{polynomial zonotope optimization problem} computes the value:
\begin{align}\label{Polynomial_Optimization}
\min_{\vx \in \cPZ} \vx^T \vd = 
\min_{\vx_1 \in \cP_{I}} \vx_1^T \vd + 
\min_{\vx_2 \in \cPZ_{D}} \vx_2^T \vd.
\end{align}
%
If we want to check whether a polynomial zonotope $\cPZ$ has intersection with a halfspace $\mathcal{H} = \{ \vx ~|~ \vx^T \vd \leq c \}$, we only need to check if the solution of \eqref{Polynomial_Optimization} is larger than $c$. 
Since linear optimization of zonotopes is efficient, the main challenge lies in computing the optimal value in the polynomial zonotope of dependent terms $\cPZ_{D}$. 
This is illustrated in the following example:
\begin{example}
    Consider checking if the polynomial zonotope $\cPZ$ from Example \ref{PZ_example} has an intersection with the halfspace 
    \[
    \mathcal{H} = \Big \{ \vx \in \mathbb{R}^2 ~~ \Big| ~~  \vx^T   \begin{bmatrix}
           1 \\ 
           1 
         \end{bmatrix}\leq 0 \Big \}.
         \]
         
         In order to check this, we only need to check whether the solution of the problem below is larger or equal to 0 : 
         \begin{align*}
          \min_{\vx \in \cPZ}  \ \vx^T \begin{bmatrix}
           1 \\
           1 
         \end{bmatrix}   &= \min_{\alpha_k, \beta_k \in [-1,1] }  \left (\begin{bmatrix}
           4 \\
           4
         \end{bmatrix} + \beta_1 \begin{bmatrix}
           1 \\
           0
         \end{bmatrix} + \alpha_1 \begin{bmatrix}
           2 \\
           0
         \end{bmatrix} + \alpha_2 \begin{bmatrix}
           1 \\
           2
         \end{bmatrix} + \alpha_1^3\alpha_2 \begin{bmatrix}
           2 \\
           2
         \end{bmatrix} \right )^T \begin{bmatrix}
           1 \\
           1 
         \end{bmatrix} \\
         &= \min_{\alpha_k, \beta_k \in [-1,1]} 8 + \beta_1 + 2\alpha_1 + 3\alpha_2 + 4\alpha_1^3\alpha_2  \\
         &= \min_{\beta_k \in [-1,1]} 8 + \beta_1 + \min_{\alpha_k \in [-1,1]} 2\alpha_1 + 3\alpha_2 + 4\alpha_1^3\alpha_2  \\
         &= 7 + (-5) = 2
         \end{align*}
         The minimum value is larger than $0$ so there is no intersection with the halfspace. 
\end{example}

As we saw in the above example, solving the optimization problem in \eqref{Polynomial_Optimization} can be used to check whether a polynomial zonotope has an intersection with a halfspace.
Furthermore, the generators of the polynomial zonotope can be projected onto the optimization direction resulting in a 1-D optimization problem.
How difficult is this problem?
As mentioned in the introduction, the full characterization of solutions of nonlinear equations given box domains is NP-hard~\cite[Sec. 4.1]{jaulin2001interval}.
In fact, even if we restrict ourselves to optimization and only consider \emph{bilinear polynomial zonotopes}---the simplest class of polynomial zonotopes with two variables per term each with an exponent of one---the problem is still NP-hard, which we show next.

First we introduce the 1-D \emph{multi-affine polynomial optimization problem}.
\begin{definition}\label{multi-affine-def}
    Consider the polynomial 
    defined as:
    \begin{align}\label{MultiAffine}
     p(x_1,x_2,\cdots,x_n) = \sum_{I \subseteq \{1,2,\cdots,n \}} a_I \prod_{i \in I} x_i   
    \end{align}
    The 1-D multi-affine polynomial optimization problem is:
    \begin{align*}
        \min_{\substack{x_1,...,x_n \\ x_i \in [-1,1] }} p(x_1,x_2,\cdots,x_n).
    \end{align*}
\end{definition}

Since all variables in a multi-affine optimization problem have an exponent of one, the partial derivative along each variable cannot change sign.
This means that the optimal value must occur on one of the corners of the $n$-dimensional box of the domain and it is sufficient to consider this finite set when optimizing.
    \begin{align*}
        \min_{\substack{x_1,...,x_n \\ x_i \in [-1,1] }} p(x_1,x_2,\cdots,x_n) = \min_{\substack{x_1,...,x_n \\x_i \in \{-1\} \cup \{1\}}} p(x_1,x_2,\cdots,x_n)
    \end{align*}
Note that the polynomial zonotope motivating our work in Figure~\ref{fig:split_comparison} was a multi-affine polynomial zonotope; we obtained the true boundary points shown in red using a version of this corner enumeration strategy.

The simplest type of non-trivial multi-affine optimization problem has two variables per term, since any terms with a single variable could be optimized by simply looking at the sign of $a_I$ similar to optimization methods for zonotopes.
We call this a \emph{bilinear optimization problem}, which corresponds to optimization of a linear objective function over the dependent factors part of a bilinear polynomial zonotope, which has the corresponding restrictions on its terms
\begin{align}
\label{eq:bilinear_opt}
    \min_{    \substack{x_1,...,x_n \\ x_i \in \{-1\} \cup \{1\} }} \sum_{i=1}^{n} \sum_{j=i+1}^{n} a_{i,j} x_i x_j.
\end{align}

\begin{theorem}
Optimization over bilinear polynomial zonotopes is NP-complete.
\end{theorem}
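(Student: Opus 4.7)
The plan is to establish NP-completeness in two parts: membership in NP, and NP-hardness via reduction from MAX-CUT. For the decision version ``does there exist $\vx \in \{-1,1\}^n$ with $\sum_{i<j} a_{i,j} x_i x_j \leq k$?'', I would use the $\pm 1$ assignment itself as a witness. The observation already made in the excerpt --- that the continuous optimum over $[-1,1]^n$ is attained at a corner --- shows this decision form is equivalent to the continuous optimization problem, and evaluating the bilinear form at the witness in polynomial time certifies the answer, giving NP membership.

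For NP-hardness, I would reduce from MAX-CUT. Given a graph $G=(V,E)$ with $|V| = n$, set $a_{i,j} = 1$ whenever $\{i,j\}\in E$ and $a_{i,j} = 0$ otherwise. Identifying any $\pm 1$ assignment $\vx$ with the partition $S = \{i : x_i = +1\}$, each edge $\{i,j\}$ contributes $x_i x_j = +1$ when it is not cut and $x_i x_j = -1$ when it is cut, so
\begin{align*}
\sum_{i<j} a_{i,j} x_i x_j \;=\; |E| - 2\,|\mathrm{Cut}(S)|.
\end{align*}
Hence minimizing the bilinear form is equivalent to maximizing the cut size, and the MAX-CUT threshold ``is there a cut of size at least $t$?'' translates directly into ``is the bilinear minimum at most $|E| - 2t$?'' Since MAX-CUT is NP-hard, so is the bilinear optimization problem in (\ref{eq:bilinear_opt}).

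The last step is to package this bilinear optimization as linear optimization over an explicit bilinear polynomial zonotope. I would take a one-dimensional polynomial zonotope with $n$ dependent factors $\alpha_1,\ldots,\alpha_n$ and $h = |E|$ dependent generators, where for each edge $\{i,j\}$ one generator column is the scalar $G_D(\cdot,\ell) = a_{i,j}$ with exponent column having $1$ in rows $i$ and $j$ and zero elsewhere. With objective direction $\vd = 1$, the linear optimization (\ref{Polynomial_Optimization}) over the resulting set is exactly $\min_{\alpha_i \in [-1,1]} \sum_{i<j} a_{i,j} \alpha_i \alpha_j$, which by the corner-optimality argument coincides with (\ref{eq:bilinear_opt}). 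The construction is polynomial in $n$ and $|E|$.

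I expect the only real obstacle to be careful bookkeeping: verifying polynomial size of the encoding and stating the corner-optimality step cleanly so that the $\pm 1$ decision version is rigorously equivalent to the continuous one. The core ingredients --- quadratic forms over the Boolean cube capture MAX-CUT, and any such form is syntactically a bilinear polynomial zonotope --- are both standard, so no substantive technical barrier stands in the way.
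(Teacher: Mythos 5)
Your reduction is essentially the same as the paper's: the paper reduces from minimum edge-deletion graph bipartization, which is the exact complement of MAX-CUT (the minimum number of edges to delete to make $G$ bipartite equals $|E|$ minus the maximum cut), so your identity $\sum_{i<j} a_{i,j} x_i x_j = |E| - 2\,|\mathrm{Cut}(S)|$ and the paper's bookkeeping with $a_{i,j}=\tfrac{1}{2}$ are two phrasings of the same argument. Your write-up is in fact slightly more complete than the paper's, since you also verify NP membership (the corner-optimality observation plus a $\pm 1$ witness) and spell out the explicit polynomial-zonotope encoding, neither of which the paper's proof states explicitly despite claiming NP-completeness.
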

\begin{proof}
    We show that if we could solve the bilinear optimzation problem from~\eqref{eq:bilinear_opt}, then we could also solve the \emph{minimum edge-deletion graph bipartization} problem, which is NP-complete~\cite{yannakakis1978node,garey1974some}.
    The minimum edge-deletion graph bipartization problem is the problem of computing the minimum number of edges that must be deleted so that an undirected graph $G$ becomes a bipartite graph\footnote{In a bipartite graph, there are two groups of vertices, and edges are only allowed between the groups, not within each group.}.
    Let $G = (V,E)$ be an arbitrary undirected graph with vertices $ V = \{1,2,\ldots, n\}$ and edges $E$, where an edge $e \in E$ connecting vertices $i$ and $j$ is represented as $e = (i,j)$, with convention $i < j$.
    Let  $\delta_G$ be the least number of edges we need to remove to make graph $G$ bipartite. 
    In \eqref{eq:bilinear_opt}, we assign $a_{i,j} = \left \{\begin{array}{cc}
       \frac{1}{2}  & (i,j)\in E \\
       0  & (i,j) \notin E
    \end{array} \right .$. 
    
    Now define the assignment of the variables corresponding to the optimal solution of~\eqref{eq:bilinear_opt} as
    $$
    x_1^{*},\cdots,x_n^{*} = \argmin_{x_i \in \{-1\} \cup \{1\}} \sum_{i=1}^{n} \sum_{j=i+1}^{n} a_{i,j} x_i x_j 
    $$
    and define the value $\delta$ as 
    \begin{align} \label{eq:opt_delta}
     \delta = \frac{\vert E \vert}{2} + \sum_{i=1}^{n} \sum_{j=i+1}^{n} a_{i,j} x_i^* x_j^* 
    \end{align}
    Now consider a biartite partitioning $V = V_1\cup V_2$, where $V_1 = \big \{ i \in V ~\big |~ x_i^* = -1 \big \}$ and $V_2 = \big \{ i \in V ~\big |~ x_i^* = 1 \big \} $.
    Let $\tilde{E}$ be the set of edges $ = (i,j)$ where either  $i,j \in V_1$ or $i,j \in V_2$; these are the edges to be removed such that $G$ becomes a bipartite graph. 
    By the definition of $\delta_G$, we must have  $\vert \tilde{E} \vert \geq \delta_G$. Now since $x_i^*x_j^* = 1$ when $(i,j) \in \tilde{E}$, and  $x_i^*x_j^* = -1$ when $(i,j) \in  E/ \tilde{E}$ we have
    \begin{align*}
           \vert \tilde{E} \vert  & = \sum_{(i,j) \in \tilde{E}} x_i^*x_j^*   \\
           &=  \underbrace{\frac{1}{2}\left ( \sum_{(i,j) \in \tilde{E}}x_i^*x_j^* - \sum_{(i,j) \in E/ \tilde{E}}x_i^*x_j^* \right)}_{\frac{\vert E \vert }{2}} +\underbrace{ \frac{1}{2}\left ( \sum_{(i,j) \in \tilde{E}}x_i^*x_j^* +  
            \sum_{(i,j) \in E / \tilde{E}}x_i^*x_j^*\right )}_{\sum_{i=1}^{n}\sum_{j = i +1}^{n}a_{i,j}x_i^*x_j^*}  \\
           &= \frac{\vert E \vert}{2} + \sum_{i=1}^{n}\sum_{j = i +1}^{n}a_{i,j}x_i^*x_j^* \\ 
           & = \delta 
    \end{align*}
    Hence $\delta \geq \delta_G$.
    
    Next, define $E_G$ to be the smallest set of edges that need to removed to make $G$ bipartite, so that $\vert E_G \vert = \delta_G$.
    The graph $ \tilde{G} = (V,E/E_G)$ is bipartite, so we can partition the graph $\tilde{G}$ into the two bipartite sets $V_1^*$ and $V_2^*$ such that there are only edges are between $V_1^*$ and $V_2^*$. Now define:
    $$
    x'_i = \left \{ \begin{array}{cc}
       -1,  & i \in V_1^*, \\
        1, &  i \in V_2^*.
    \end{array}
    \right.$$
    Then, since $\delta$ comes from the solution of the minimization problem in~\eqref{eq:opt_delta}:
    \begin{align*}
        \delta \leq \frac{\vert E \vert}{2} + \sum_{i=1}^{n} \sum_{j=i+1}^{n} a_{i,j} x'_i x'_j &= \frac{\vert E \vert}{2} + \frac{1}{2}
        \underbrace{\sum_{(i,j) \in E/E_G} x'_ix'_j}_{(\vert E\vert  -\vert E_G\vert)(-1)} 
        + 
        \frac{1}{2} \underbrace{ \sum_{(i,j) \in E_G} x'_i x'_j}_{\vert E_G \vert} 
        \\
        &= \frac{\vert E \vert}{2} - \left (\frac{\vert E\vert  -\vert E_G\vert } {2}\right )  + \frac{\vert E_G \vert }{2} \\
        &= \vert E_G \vert = \delta_G
    \end{align*}
    Therefore $\delta \leq \delta_G$ and combining both parts $\delta = \delta_G$. Hence finding the solution of \emph{minimum edge-deletion graph bipartization} problem reduces to solving  \eqref{eq:bilinear_opt}. \qed
\end{proof}

\begin{corollary}\label{corollary:nphard}
Polynomial zonotope intersection checking is NP-hard.
\end{corollary}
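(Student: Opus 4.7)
The plan is to deduce the corollary from the preceding theorem via a straightforward reduction. The key observation, already used at the start of Section~\ref{sec:nphard}, is that a polynomial zonotope $\cPZ$ intersects the halfspace $\mathcal{H} = \{\vx \mid \vx^T \vd \leq c\}$ if and only if $\min_{\vx \in \cPZ} \vx^T \vd \leq c$. So any procedure that decides halfspace intersection also decides the threshold question ``is the optimum at most $c$?'' for the optimization problem~\eqref{Polynomial_Optimization}.

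First I would take an arbitrary instance of the bilinear optimization problem~\eqref{eq:bilinear_opt}, given by coefficients $\{a_{i,j}\}_{1 \le i < j \le n}$ and a threshold $c$, and construct in polynomial time a bilinear polynomial zonotope $\cPZ$ and a direction $\vd$ such that $\vx^T \vd$, evaluated at a generic point of $\cPZ$, equals the bilinear sum $\sum_{i<j} a_{i,j} x_i x_j$. Concretely, I would take $\vc = 0$, no independent generators, and build the dependent generator matrix $G_D$ column by column: for each pair $(i,j)$ with $a_{i,j} \neq 0$, include one column whose projection onto $\vd$ equals $a_{i,j}$ and whose corresponding column of the exponent matrix $E$ has a $1$ in rows $i$ and $j$ and zeros elsewhere. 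Choosing $\vd$ as a fixed coordinate direction makes the projection trivial to read off.

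Next I would invoke the preceding theorem: since bilinear optimization is NP-complete, its associated decision problem ``is the minimum at most $c$?'' is NP-complete as well. Under the construction above, an algorithm that decides whether $\cPZ \cap \mathcal{H} \neq \emptyset$ immediately answers this decision problem, because the intersection test is exactly the inequality $\min_{\vx \in \cPZ} \vx^T \vd \leq c$. Thus intersection checking is NP-hard.

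I do not expect a genuine obstacle, since the corollary is essentially a repackaging of the theorem in geometric language. The only point requiring care is to verify that the constructed $\cPZ$ genuinely lies in the class of polynomial zonotopes with only dependent terms (so that Definition~\ref{polynomial_zonotope_def} applies with $q=0$), and that the reduction is polynomial in the size of the bilinear instance, both of which follow immediately from the construction.
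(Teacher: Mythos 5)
Your proposal is correct and follows the same route as the paper: both use the observation from the start of Section~\ref{sec:nphard} that halfspace intersection checking for $\cPZ$ is equivalent to the threshold question for the optimization problem~\eqref{Polynomial_Optimization}, and both then invoke the NP-completeness of bilinear optimization established in the preceding theorem. The paper's proof is terser (it treats the encoding of a bilinear sum as a bilinear polynomial zonotope as immediate, since~\eqref{eq:bilinear_opt} was already introduced as exactly that), whereas you spell out the construction of $G_D$, $E$, and $\vd$ explicitly, but the underlying argument is the same.
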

\begin{proof}
    Since optimization of bilinear polynomial zonotopes is NP-complete, intersection checking of bilinear polynomial zonotopes is also NP-complete.
    As these are a type of polynomial zonotope, halfspace intersection checking of general polynomial zonotopes is also at least as difficult, and so is NP-hard.
    %
    \qed
\end{proof}

\section{The Overapproximate and Split Algorithm}
\label{sec:alg_concerns}
While Corollary~\ref{corollary:nphard} showed that checking the intersection between a polynomial zonotope and another set can be a difficult problem, what algorithm is used in practice?
The existing method~\cite{kochdumper2022extensions,bak2022reachability} consists of two steps.
In step one, the polynomial zonotope is overapproximated using a zonotope.
If this zonotope overapproximation does not intersect the other set, then the smaller polynomial zonotope does not intersect the other set either and the algorithm terminates.
Otherwise, a point is sampled from inside the polynomial zonotope and tested if it is inside the other set\footnote{The specific sample point is not important for convergence, although it is typically heuristically derived from the zonotope overapproximation.}.
If so, a witness point for the intersection has been found and the algorithm terminates.
If neither of these are applicable, in step two, the algorithm divides the polynomial zonotope into two smaller polynomial zonotopes and repeats from step one recursively.

The algorithm used to plot a polynomial zonotope is similar, using a recursive depth bound and then plotting the zonotope overapproximations at the tree leaves.
For this method to obtain high precision, we may need to split the polynomial zonotope into a large number of smaller pieces, compute the zonotope approximation for each piece, and then take the union of those zonotopes to serve as the overapproximation of the original polynomial zonotope. 
Figure \ref{fig:split_example} shows a visualization of the overapproximate and split intersection algorithm.
\begin{figure}[t]
    \centering
    \begin{subfigure}{0.49\textwidth}
        \centering
        \includegraphics[width=\linewidth]{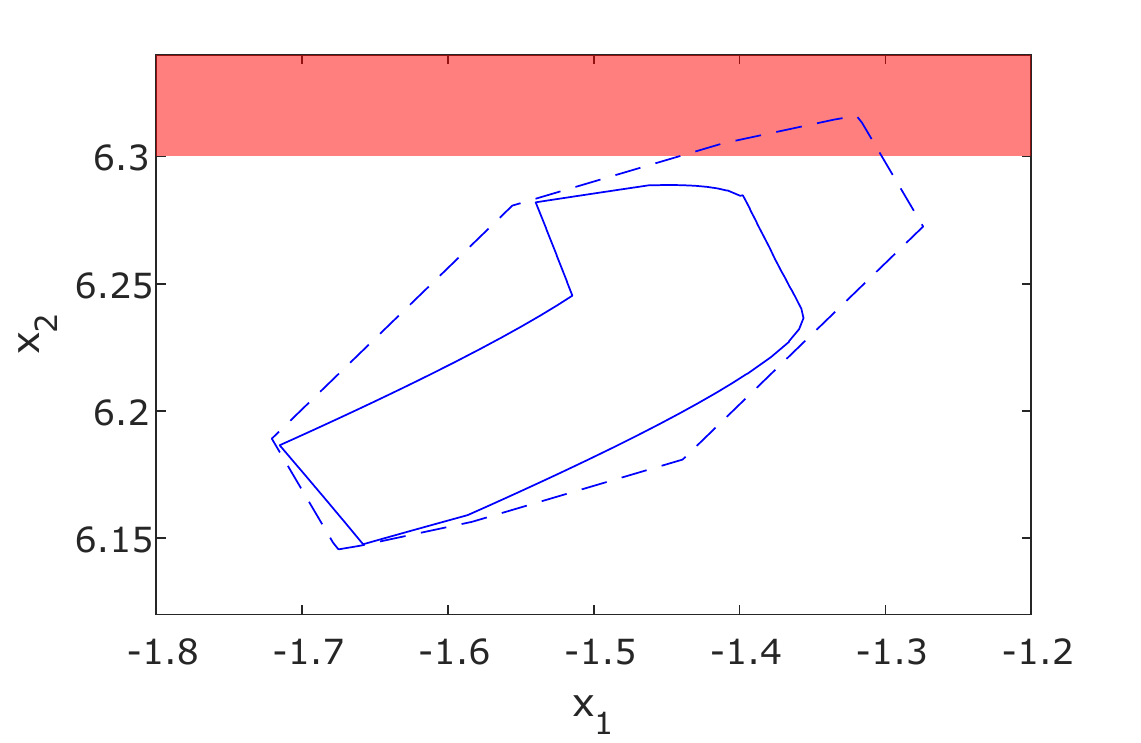}
    \end{subfigure}%
    \hfill
    \begin{subfigure}{0.49\textwidth}
        \centering
        \includegraphics[width=\linewidth]{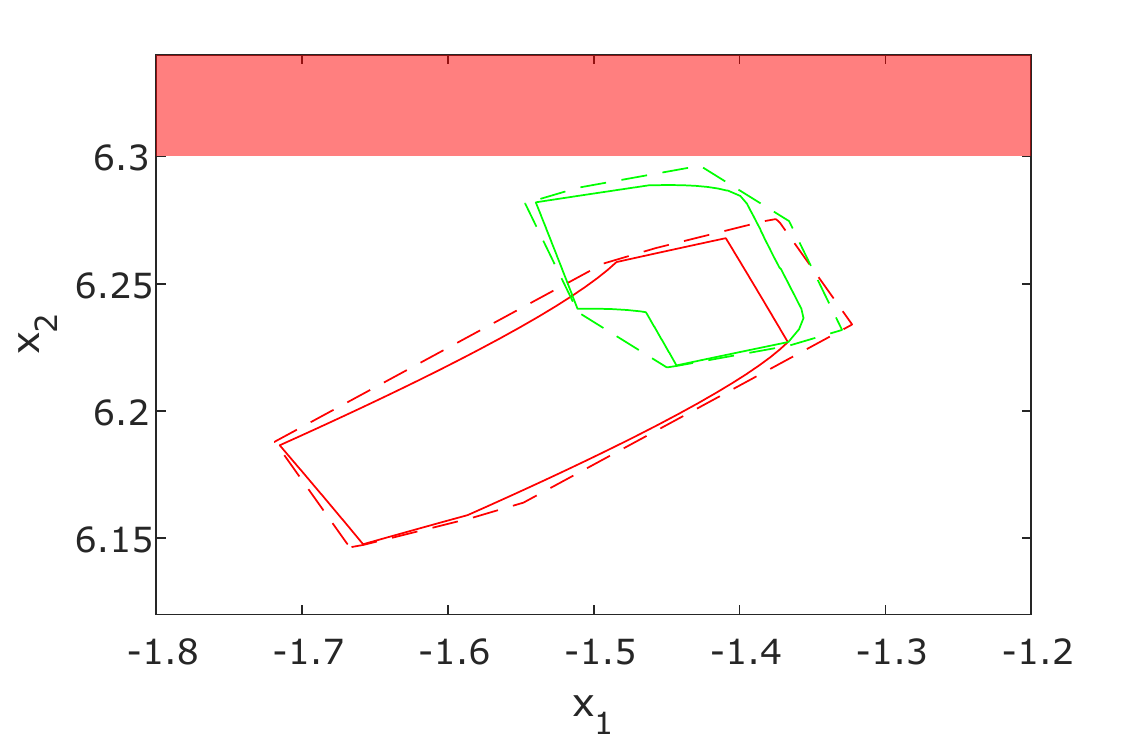}
    \end{subfigure}
    \caption{An illustration of the overapproximate and split intersection algorithm, where the zonotope over-approximation (dashed line) of the original polynomial zonotope is too conservative (left) while after splitting the zonotope over-approximation of the two split polynomial zonotopes is accurate enough to show there is no intersection with the red shaded region (image from~\cite{bak2022reachability}).}
    \label{fig:split_example}
\end{figure}   

\subsection{Algorithm Definition}
The algorithm consists of two steps: (i) overapproximate and (ii) split.

\vspace{1em}
\noindent
\textbf{Overapproximation Step.} The key observation motivating the overapproximation step is that since each factor $\alpha_k \in [-1,1]$, the product of factors in each term in outer sum is also in $[-1,1]$:
$$
 \prod_{k=1}^{r} \alpha_k^{E(k,i)}  \in [-1,1]
$$
Therefore, we can replace this product with a new variable $\beta_i \in [-1,1]$. 
This results in an overapproximation because it drops dependencies that the factors $\alpha_k$ may have had with other terms.
This can be made slightly tighter if the exponent $E(k,i)$ is always even:
\begin{align*}
\prod_{k=1}^{r} \alpha_k^{E(k,i)} \in [0,1] \quad \quad \text{($E(k,i)$ is even for all $k$).}
\end{align*}
In this case, we replace the product with $\frac{ \beta_i + 1}{2}$, since a zonotope requires each $\beta_i \in [-1,1]$. Now we give the formal definition of the overapproximation step:

\begin{definition}\label{zonotope_approx}
Let $\cPZ = \cP_I \oplus \cPZ_D$ be a polynomial zonotope with $\cPZ_D = \langle G_D,E \rangle$.
The zonotope overapproximation of $\cPZ$ is defined as:
\begin{align}\label{Poly}
    \cP &= \cP_I \oplus \cP_D
\end{align}
with
\begin{align*}
\cP_D &= \Bigg  \{\sum_{i \in K} \beta_i G_D(\cdot ,i)  + \sum_{i \in H} \left ( \frac{\beta_i + 1}{2} \right )G_D(\cdot ,i)  ~~
     \bigg | ~~ \beta_i \in [-1,1] \Bigg \} \\
\end{align*}
where $H$ is the set of indices of terms with all even powers, 
$$H = \{ i ~  | ~\forall k ~ E(k,i) \equiv 0 \mbox{ \textnormal{(mod 2)} }  \}$$ 
and $K$ the set of remaining indices 
$$K =  \{1,\cdots, h  \}/H$$
\end{definition}

\vspace{1em}
\noindent
\textbf{Split Step.}
When the overapproximation of a polynomial zonotope is too large, step two of the algorithm splits the polynomial zonotope into two smaller pieces.  
This is done by choosing some factor $\alpha_s$ to split and then noting:
\begin{align}\label{split_intuition}
[-1,1] = \underbrace{\left\{\frac{1+\alpha_s}{2}~ \bigg | ~ \alpha_s\in [-1,1]\right\}}_{[0,1]} \cup \underbrace{\left\{-\frac{1+
\alpha_s}{2}~ \bigg |~  \alpha_s \in [-1,1]\right\}}_{[-1,0]}.    
\end{align}

\noindent
We split the dependent part of a polynomial zonotope $\cPZ_D = \langle G_D,E \rangle_{\cPZ}$ into:

\begin{align*}
    \cPZ_{D_{{1,s}}} &= 
     \Bigg \{ \sum_{i=1}^{h} 
    A_{s,i}
    \left (\frac{1+\alpha_s}{2} \right)^{E(s,i)}G_D(\cdot,i)~\Bigg | \quad \alpha_k \in [-1,1] \Bigg \}
\end{align*}
and 
\begin{align*}
    \cPZ_{D_{{2,s}}} &= \Bigg \{  \sum_{i=1}^{h} A_{s,i}
    \left (-\frac{1+\alpha_s}{2} \right )^{E(s,i)}G_D(\cdot,i) ~\Bigg | \quad \alpha_k\in [-1,1] \Bigg \}
\end{align*}
where $A_{s,i}$ is the product of all the factors in the $i$th term excluding $\alpha_s$:
\begin{equation}
A_{s,i} = \prod_{k=1,k\neq s}^{r}  \alpha_k^{E(k,i)}.
\label{eq:def:Asi}
\end{equation}
\begin{proposition}\label{Split}
    If $\cPZ = \cP_I \oplus \cPZ_D$, where $\cPZ_{1,s} =\cP_I \oplus  \cPZ_{D_{1,s}}$ and $\cPZ_{2,s} =\cP_I \oplus  \cPZ_{D_{2,s}}$, then $\cPZ_{1,s}$ and $\cPZ_{2,s}$ are polynomial zonotopes and 
    \[
    \cPZ = \cPZ_{1,s} \bigcup \cPZ_{2,s}.
    \]
\end{proposition}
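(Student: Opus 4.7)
The plan is to prove the two separate claims in the proposition: first that each $\cPZ_{j,s}$ is actually a polynomial zonotope (i.e., can be put in the form of Definition~\ref{polynomial_zonotope_def}), and second that the union of the two pieces equals the original set.

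For the first claim, fix $j \in \{1,2\}$ and consider a single term of $\cPZ_{D_{j,s}}$, namely $A_{s,i}\bigl(\pm\tfrac{1+\alpha_s}{2}\bigr)^{E(s,i)} G_D(\cdot,i)$. I would apply the binomial theorem to expand $\bigl(\pm\tfrac{1+\alpha_s}{2}\bigr)^{E(s,i)}$ into a finite sum $\sum_{\ell=0}^{E(s,i)} c_{i,\ell}\,\alpha_s^{\ell}$ with rational scalar coefficients $c_{i,\ell}$ and non-negative integer exponents. Substituting the definition of $A_{s,i}$ from~\eqref{eq:def:Asi}, each summand becomes $c_{i,\ell}\bigl(\prod_{k\neq s}\alpha_k^{E(k,i)}\bigr)\alpha_s^{\ell}\,G_D(\cdot,i)$, which is exactly one term of a polynomial zonotope: a scalar-multiple of $G_D(\cdot,i)$ times a monomial in $\alpha_1,\dots,\alpha_r$ with non-negative integer powers. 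Collecting all such summands over $i=1,\dots,h$ and $\ell=0,\dots,E(s,i)$ yields a new generator matrix $G'_D$ (whose columns are scalar multiples of columns of $G_D$) and a new exponent matrix $E'$ (copying the rows $k\neq s$ of $E$ and putting $\ell$ in row $s$), verifying that $\cPZ_{D_{j,s}}$ is a polynomial zonotope. Taking $\cP_I \oplus \cPZ_{D_{j,s}}$ then gives a polynomial zonotope in the sense of Definition~\ref{polynomial_zonotope_def}.

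For the second claim, it suffices by distributivity of $\oplus$ over $\cup$ to prove $\cPZ_D = \cPZ_{D_{1,s}} \cup \cPZ_{D_{2,s}}$, and then Minkowski-sum both sides with $\cP_I$. For the inclusion $\cPZ_{D_{1,s}}\cup\cPZ_{D_{2,s}} \subseteq \cPZ_D$, observe that whenever $\alpha_s \in [-1,1]$, both $\tfrac{1+\alpha_s}{2}$ and $-\tfrac{1+\alpha_s}{2}$ lie in $[-1,1]$, so substituting $\alpha_s^{\mathrm{new}} = \pm\tfrac{1+\alpha_s}{2}$ into the definition of $\cPZ_D$ recovers each point produced by $\cPZ_{D_{j,s}}$. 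For the reverse inclusion, given any $\vx \in \cPZ_D$ realized by factors $\alpha_1,\dots,\alpha_r \in [-1,1]$, split on the sign of $\alpha_s$: if $\alpha_s \in [0,1]$ set $\alpha'_s = 2\alpha_s - 1 \in [-1,1]$, so $\alpha_s = \tfrac{1+\alpha'_s}{2}$ and $\vx \in \cPZ_{D_{1,s}}$; if $\alpha_s \in [-1,0]$ set $\alpha'_s = -2\alpha_s - 1 \in [-1,1]$, so $\alpha_s = -\tfrac{1+\alpha'_s}{2}$ and $\vx \in \cPZ_{D_{2,s}}$. This is the formal version of~\eqref{split_intuition}.

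The only nontrivial step is the first one: writing $\cPZ_{D_{j,s}}$ in canonical polynomial zonotope form requires expanding the binomials and reindexing generators and exponents carefully, but this is purely a bookkeeping exercise and introduces no difficulty beyond that. The set-equality argument is a direct reparameterization of a single factor and follows immediately from~\eqref{split_intuition}.
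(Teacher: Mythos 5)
Your proof is correct and takes essentially the same route as the paper's, which (very tersely) reduces the claim to the reparameterization identity in~\eqref{split_intuition}. You simply fill in the details the paper leaves implicit: the binomial expansion exhibiting each $\cPZ_{D_{j,s}}$ in canonical polynomial-zonotope form, distributivity of Minkowski sum over union, and the two-sided substitution $\alpha_s \leftrightarrow \pm\tfrac{1+\alpha'_s}{2}$ that realizes~\eqref{split_intuition} as a set equality.
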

\begin{proof}
This follows from the definitions of $\cPZ_{1,s}$ and $\cPZ_{2,s}$ using~\eqref{split_intuition}.
\end{proof}
\subsection{Convergence Concerns}
While the the overapproximate and split algorithm for polynomial zonotopes has been simply stated in prior work~\cite{kochdumper2022extensions,bak2022reachability}, we now identify two non-obvious concerns with the approach, given in Propositions~\ref{no-monotonic} and~\ref{unbounded_example}.

\vspace{1em}
\noindent
\textbf{Convergence Concern 1:} As the sizes of the polynomial zonotopes get smaller in the algorithm due to splitting, we may expect the corresponding zonotope overapproximation is also getting smaller. 
However, this is not true in general.
\begin{proposition}\label{no-monotonic}  
Overapproximation error can increase during the overapproximate and split algorithm.
\end{proposition}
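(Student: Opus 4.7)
My plan is to establish the claim by exhibiting a single explicit counterexample in two dimensions. I would take
$$
\cPZ_D = \bigg\{ \alpha_1 \alpha_2 \begin{bmatrix}1\\0\end{bmatrix} + \alpha_1^2 \begin{bmatrix}0\\1\end{bmatrix} ~\bigg|~ \alpha_1, \alpha_2 \in [-1,1] \bigg\},
$$
whose true point set, parameterized by $x = \alpha_1\alpha_2$ and $y = \alpha_1^2$, is the parabolic region $\{(x,y) : y \in [0,1],\ |x| \le \sqrt{y}\}$ of area $4/3$. The first exponent column $[1,1]^T$ lies in $K$ and the second $[2,0]^T$ lies in $H$, so Definition \ref{zonotope_approx} gives the zonotope overapproximation $[-1,1]\times[0,1]$ of area $2$, and an initial overapproximation error (excess area over the true set) of $2/3$.

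Next I would apply Proposition \ref{Split} with $s=1$ and rewrite each piece as a standard polynomial zonotope by expanding the powers $\bigl(\tfrac{1+\alpha_1}{2}\bigr)^{E(1,i)}$ for $i=1,2$. The decisive algebraic observation is that
$$
\bigl(\tfrac{1+\alpha_1}{2}\bigr)^{2} = \tfrac14 + \tfrac{\alpha_1}{2} + \tfrac{\alpha_1^2}{4},
$$
so the split creates a brand new odd-exponent generator $\tfrac12 \alpha_1 [0,1]^T$ in the vertical direction which was absent before. Because the overapproximation of Definition \ref{zonotope_approx} treats every dependent monomial independently, this new term is lifted to a fresh factor $\beta \in [-1,1]$ and pushes the $y$-range of the overapproximation below zero.

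I would then finish with routine bookkeeping on piece one to obtain its center $[0,\tfrac14]^T$, three $K$-generators $\tfrac12[1,0]^T$, $\tfrac12[1,0]^T$, $\tfrac12[0,1]^T$, and one $H$-generator $\tfrac14[0,1]^T$, whose induced zonotope is the rectangle $[-1,1]\times[-\tfrac14, 1]$. Piece two yields the same rectangle by the symmetry $(-t)^2 = t^2$. Since by Proposition \ref{Split} the union of the two pieces still equals $\cPZ_D$, the new overapproximation error is $5/2 - 4/3 = 7/6 > 2/3$, establishing the proposition.

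The main obstacle is nothing deep: it is simply the bookkeeping of the expansion, the re-identification of which new exponent columns lie in $H$ versus $K$, and the verification that the new $y$-range strictly exceeds the old. The underlying conceptual point is that splitting introduces lower-order monomials in the split variable whose presence the independent overapproximation step cannot exploit, and this loss of structure can outweigh the geometric shrinkage produced by the split.
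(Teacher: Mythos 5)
Your proposal is correct and rests on the same mechanism as the paper's proof: expanding $\left(\frac{1+\alpha_1}{2}\right)^2 = \frac14 + \frac{\alpha_1}{2} + \frac{\alpha_1^2}{4}$ introduces a new odd-exponent generator $\frac{\alpha_1}{2}$ that Definition~\ref{zonotope_approx} can no longer confine to the nonnegative half, pushing the overapproximation below zero. The paper accomplishes this with the simpler one-dimensional example $\cPZ=\{\alpha_1^2 : \alpha_1\in[-1,1]\}=[0,1]$, whose overapproximation is exactly $[0,1]$ before the split and $[-\tfrac14,1]$ afterward; your two-dimensional example with the extra bilinear coordinate $\alpha_1\alpha_2[1,0]^T$ is valid but not needed for the argument.
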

\begin{proof}
Consider the following polynomial zonotope:
$$\cPZ = \big \{ \alpha_1^2 ~~ \big | ~~ \alpha_1 \in [-1,1]  \big \} = [0,1]$$
The overapproximation of $\cPZ$ is the zonotope
$$\cP = \bigg \{ \frac{1 + \beta_1}{2} ~~ \bigg | ~ \beta_1 \in [-1,1] \bigg \} = [0,1]$$ 
However, if we split $PZ$ into two parts using~\eqref{split_intuition}: 
$$\cPZ_1 = \cPZ_2 = \Bigg \{ \frac{1}{4} \bigg(1 + \alpha_1^2 + 2\alpha_1 \bigg) ~~ \Bigg |~~  \alpha_1 \in [-1,1] \Bigg \}$$
The overapproximation of $\cPZ_1$ and $\cPZ_2$ is
$$ \cP_1 = \cP_2 = \Bigg \{ \frac{1}{4} \bigg(\frac{3}{2}+ \frac{1}{2}\beta_1 + 2\beta_2 \bigg) ~~ \Bigg | ~~  \beta_1,\beta_2 \in [-1,1] \Bigg \} = \Big [-\frac{1}{4},1 \Big ]$$ 
Consequently, the original overapproximation $\cP \subset \cP_1 \bigcup \cP_2$.
After performing splitting, the union of the zonotope overapproximations became larger than the overapproximation before splitting---the overapproximation error has grown. \qed
\end{proof}
%
%

\vspace{1em}
\noindent
\textbf{Convergence Concern 2:} As shown above, overapproximation error can increase during the algorithm, although the individual split polynomial zonotopes are getting smaller.
Is there a bound between the the error of a polynomial zonotope and its overapproximation? 
No.

\begin{proposition}\label{unbounded_example}
The error between a polynomial zonotope and its zonotope over-approximation is unbounded. 
\end{proposition}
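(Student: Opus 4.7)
The plan is to exhibit an explicit family of polynomial zonotopes $\{\cPZ_M\}_{M \in \mathbb{N}}$ whose zonotope overapproximations from Definition~\ref{zonotope_approx} have diameter — and therefore Hausdorff distance to the true set — growing linearly in $M$, while $\cPZ_M$ itself stays bounded (in fact, a single point). Producing one such family suffices to establish unboundedness regardless of the specific error metric chosen.

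The key observation driving the construction is that the overapproximation step replaces every dependent monomial $\prod_k \alpha_k^{E(k,i)}$ by a \emph{fresh} independent factor $\beta_i$, severing all algebraic dependencies among terms. I would therefore pick a polynomial zonotope whose terms are identical monomials carrying opposite signs, so that they annihilate in the true set but act as two independent summands in the overapproximation. A minimal instance is the one-dimensional polynomial zonotope $\cPZ_M = \langle G_D, E \rangle_{\cPZ}$ with $G_D = [M,\ -M]$ and $E = [1,\ 1]$, giving $\cPZ_M = \{M\alpha_1 - M\alpha_1 \mid \alpha_1 \in [-1,1]\} = \{0\}$. Applying Definition~\ref{zonotope_approx}, both exponents are odd, so $H = \emptyset$ and $K = \{1,2\}$, yielding the zonotope overapproximation $\cP_M = \{M\beta_1 - M\beta_2 \mid \beta_1, \beta_2 \in [-1,1]\} = [-2M,\, 2M]$.

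The Hausdorff distance between $\cPZ_M = \{0\}$ and $\cP_M = [-2M, 2M]$ is exactly $2M$, which diverges as $M \to \infty$, proving the claim. To emphasize that the phenomenon is not an artifact of duplicated linear factors, I would also note that the same blow-up persists for strictly nonlinear exponents: taking $E = [2,\ 2]$ instead still gives $\cPZ_M = \{0\}$, while the overapproximation (now using the even-exponent rule that replaces each monomial by $(\beta_i + 1)/2$) is the interval $[-M, M]$, which again grows without bound.

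The argument has no substantive mathematical obstacle — its entire content is selecting a clean example and unpacking Definition~\ref{zonotope_approx}. The only detail to verify is that Definition~\ref{polynomial_zonotope_def} permits repeated columns in the exponent matrix $E$, which it does, since the only constraint on $E$ is that it lies in $\mathbb{Z}_{\geq 0}^{r \times h}$ with no uniqueness requirement on its columns.
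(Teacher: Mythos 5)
Your argument is correct under the paper's literal Definition~\ref{polynomial_zonotope_def} and is more elementary, but it takes a genuinely different route from the paper. The paper's proof uses the odd Chebyshev polynomials $T_k$: the 1-D polynomial zonotope generated by $T_k$ is exactly $[-1,1]$ for every $k$ (since $T_k$ maps $[-1,1]$ onto $[-1,1]$), while the sum of absolute coefficients---and hence the zonotope overapproximation from Definition~\ref{zonotope_approx}---grows without bound as $k$ increases. Your construction instead places two copies of the same monomial with coefficients $\pm M$, so the true set collapses to $\{0\}$ while the overapproximation spans $[-2M, 2M]$; this is shorter, the Hausdorff distance is trivial to compute, and the gap is even more dramatic (a set of diameter zero). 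The trade-off is that your example rests entirely on a representational redundancy: the exponent matrix $E = [1,\ 1]$ has duplicate columns, and a standard ``compact'' preprocessing step---routinely applied in implementations such as CORA to merge terms that share a monomial---would reduce $G_D = [M,\ -M]$ to a single zero generator and kill the example. The Chebyshev construction is immune to this: each monomial $\alpha, \alpha^3, \alpha^5, \ldots$ appears exactly once, so the polynomial zonotope is already in reduced form, yet the overapproximation error is still unbounded. Your proof suffices for the proposition as literally stated; the paper's example makes the stronger point that unboundedness is intrinsic to the overapproximation scheme rather than an artifact of a redundant encoding.
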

\begin{proof}
Consider the odd Chebyshev polynomials of first kind:
\begin{align*}
    &T_1(\alpha) = \alpha \\
    &T_3(\alpha) = 4\alpha^3 - 3\alpha \\
    &T_5(\alpha) = 16\alpha^5 - 20\alpha^3 + 5\alpha \\
    & \quad \quad \quad \ldots
\end{align*}
For Chebyshev polynomials, when $\alpha \in [-1,1]$ it is known that $T_{k}(\alpha) \in [-1, 1]$ (see Figure~\ref{fig:chebshev}).
However, the number of terms in the odd Chebyshev polynomials grows unbounded as $k$ increases. 
As a result, if we construct a polynomial zonotope from $T_{k}$ and overapproximate it with a zonotope using Definition~\ref{zonotope_approx}, the overapproximation also grows without bound as $k$ increases. \qed
\end{proof}

\begin{figure}[t]
    \centering    \includegraphics[width=\linewidth]{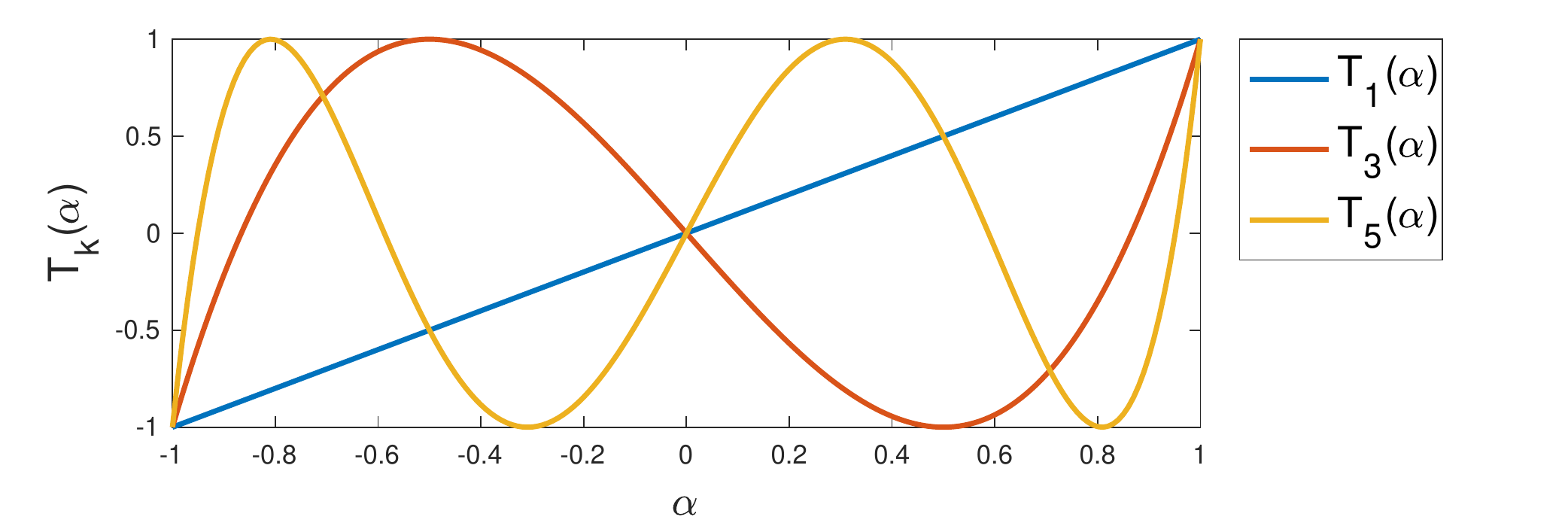}
    \caption{When $\alpha \in [-1, 1]$, the odd Chebyshev polynomials are between $[-1, 1]$, but the zonotope overapproximation using Definition~\ref{zonotope_approx} grows unbounded. }
    \label{fig:chebshev}
\end{figure}
Given Propositions~\ref{no-monotonic} and~\ref{unbounded_example}, there is a real concern that overapproximate and split algorithm may not always converge.
In the next section, we identify sufficient conditions where convergence can be guaranteed.

%
%
%
\section{Guaranteeing Convergence}
\label{sec:convergence}
While the overapproximate and split algorithm has been presented in prior work, as discussed in the previous section there is a real concern it may not always terminate.
In this section, we discuss conditions needed to ensure overapproximation error converges. 
First, we define the Hausdorff distance between two sets to serve as a criterion to evaluate the error of the zonotope overapproximation.
  \begin{definition}[Hausdorff distance]
   Given sets $S_1$ and $S_2$, the Hausdorff distance is:
   \begin{align*}
       d(S_1,S_2)= \max \Bigg \{ \sup_{x \in S_1} \inf_{y \in S_2} \Vert x - y \Vert,\sup_{y \in S_2} \inf_{x \in S_1} \Vert x - y \Vert \Bigg \}.
   \end{align*}
  \end{definition}
Note that in the case of nested sets $S_1\subset S_2$,  the second term is always 0, and the distance simplifies to
\[
d(S_1,S_2)= \sup_{x \in S_1} \inf_{y \in S_2} \Vert x - y \Vert.
\]
We now show that the Hausdorff distance between a polynomial zonotope and its zonotope overapproximation can be bounded using the norm of the generator matrix.
We will use the entry-wise matrix one-norm: $\Vert A \Vert = \sum_{j=1}^{m} \Vert A(\cdot,j) \Vert $.

  \begin{lemma}\label{relax}
Let $\cPZ = \cP_I \oplus \cPZ_D$ with dependent part $ \cPZ_D = \langle G_D,E \rangle _{PZ_D}$ and zonotope overapproximation $\cP = \cP_I \oplus \cP_D$ from Definition \ref{zonotope_approx},
\begin{align*}
    d(\cPZ,\cP) \leq \Vert G_D\Vert.
\end{align*}
\end{lemma}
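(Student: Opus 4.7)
The plan is to exploit that the zonotope overapproximation $\cP$ contains $\cPZ$ by construction, which collapses the Hausdorff distance to $\sup_{x \in \cP} \inf_{y \in \cPZ} \|x - y\|$. The inclusion $\cPZ \subseteq \cP$ holds because each product $\prod_{k=1}^r \alpha_k^{E(k,i)}$ lies in $[-1,1]$ (or in $[0,1]$ when every exponent in column $i$ is even), which is precisely the range of the fresh scalar $\beta_i$ (respectively $\tfrac{\beta_i+1}{2}$) introduced in Definition~\ref{zonotope_approx}. So I would open the proof with this observation to get rid of one of the two Hausdorff terms.

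With the distance reduced to a one-sided supremum, fix an arbitrary $x \in \cP$ and write $x = x_I + x_D$ with $x_I \in \cP_I$ and $x_D \in \cP_D$. The crux is choosing a nearby $y \in \cPZ$. I would take $y = x_I + y_D$, where $y_D \in \cPZ_D$ is the point obtained by evaluating at $\alpha_1 = \cdots = \alpha_r = 0$, so that every product term of $y_D$ vanishes. Because the independent part $\cP_I$ is shared between $\cP$ and $\cPZ$, the center and $G_I$ contributions cancel in $x - y$, and what remains is the dependent discrepancy
\[
\|x - y\| \;=\; \Bigl\| \sum_{i \in K} \beta_i\, G_D(\cdot, i) \;+\; \sum_{i \in H} \tfrac{\beta_i+1}{2}\, G_D(\cdot, i) \Bigr\|.
\]
Applying the triangle inequality on the vector one-norm and noting that $|\beta_i| \le 1$ for $i \in K$ and $\tfrac{\beta_i+1}{2} \in [0,1]$ for $i \in H$ bounds the right-hand side term-by-term by $\|G_D(\cdot, i)\|$, which sums to $\sum_{i=1}^h \|G_D(\cdot, i)\| = \|G_D\|$. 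Since this bound is uniform in $x$, the Hausdorff distance is at most $\|G_D\|$.

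The only delicate point, and the one I expect to be the main obstacle, is justifying $y_D = 0$: setting all $\alpha_k = 0$ annihilates every term whose exponent column has a positive entry, but a term with an entirely zero exponent column would evaluate to the constant $G_D(\cdot, i)$ instead. In the standard sparse formulation such constant terms are absorbed into the center and do not appear in $G_D$, so this edge case does not arise. If one wished to handle it anyway, the same bound survives because the coefficient difference $\tfrac{\beta_i+1}{2} - 1 \in [-1,0]$ still has magnitude at most one, so the triangle-inequality argument is unchanged.
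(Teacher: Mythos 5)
Your proof is correct and follows essentially the same route as the paper: reduce the Hausdorff distance to the one-sided supremum via $\cPZ \subseteq \cP$, then for $x = x_I + x_D \in \cP$ pick the comparison point $y = x_I$ in $\cPZ$ (which is what the paper does implicitly when it asserts $y_I \in \cPZ$), and bound $\|x_D\|$ by the triangle inequality and $\|G_D\|$. In fact you are slightly more careful than the paper: the paper's claim that $y_I \in \cPZ$ silently relies on $0 \in \cPZ_D$, which is exactly your observation that setting all $\alpha_k = 0$ annihilates the dependent part, and you are right that this fails in the presence of a column of $E$ that is all zeros; your remark that such constant columns are either absorbed into the center or, failing that, still satisfy $\bigl|\tfrac{\beta_i+1}{2} - 1\bigr| \le 1$ closes a small gap the paper leaves implicit.
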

\begin{proof}
Since $\cPZ\subset \cP$, then 
\begin{align*}
    d(\cPZ,\cP) 
    =\sup_{y \in \cP} \inf_{x \in \cPZ} \Vert x - y \Vert.
\end{align*}
Now for any point $y$ in the zonotope overapproximation we can write it as:
\begin{align*}
    y =  y_ I + y_D
\end{align*}
where $y_I \in Z_I$ and $$ y_D =   \left (\sum_{i \in K} \beta_i G_D(\cdot ,i)    +   \sum_{i \in H} \frac{\beta_i + 1}{2} G_D(\cdot ,i)\right )  \in Z_D.$$
Since $ y_I \in \cPZ$, we have:
\[
d(\cPZ,\cP) = \sup_{y \in \cP} \inf_{x \in \cPZ} \Vert x - y \Vert  
     \leq \sup_{y \in \cP} \Vert y_I - y \Vert  = \sup_{y \in \cP }\|y_D\|.
    \]
    Using the triangle inequality, 
    \[
    \Vert y_D \Vert = \bigg \Vert \sum_{i \in K} \beta_i G_D(\cdot ,i)    +   \sum_{i \in H} \frac{\beta_i + 1}{2} G_D(\cdot ,i)
    \bigg \Vert 
      \leq  \sum_{i \in H\bigcup K} \Vert G_D(\cdot,i) \Vert  = \Vert G_D \Vert
    \]
thus completing the proof. \qed
\end{proof}
Since the union of the split polynomial zonotopes forms the original polynomial zonotope (Proposition \ref{Split}), to demonstrate that the union of zonotope approximations converges to the original polynomial zonotope, we first establish that each zonotope overapproximation converges to its respective polynomial zonotope. 
Using Lemma \ref{relax}, we only need to show that the norm of the dependent matrix decreases after splitting. 

In order to show this, we need additional constraints on how this splitting variable $s$ is chosen.
Various heuristics for choosing $s$ can be found in the literature, but to ensure convergence we require a \emph{fairness} assumption, in that each factor needs to be selected an infinite number of times. 
We assume $s$ is chosen cyclically to satisfy this requirement.
Let us first consider a simple example which tells us why the norm of the dpendent matrix decreases after splitting cyclically.
\begin{example}
Let us first consider a polynomial zonotope 
$\cPZ = \{ \alpha_1^E ~ \big | ~ \alpha_1 \in [-1,1] \} $ which only has a single factor $\alpha_1$, the dependent matrix $G_D = 1$ and exponent $E$ In this case, when split, 
\begin{equation}
\left(\pm \frac{1+\alpha}{2} \right)^{E} = \frac{\pm 1}{2^E} + \frac{1}{2^{E}}\sum_{j=1}^{E} \binom{E}{j} (\pm 1)^j\alpha_k^j
\label{eq:binomsplit:helper}
\end{equation}
and since the constant will belong to independent part, dependent generator becomes 
\[
G_D^{1,1} = \frac{1}{2^E} \begin{bmatrix}
    \binom{E}{1}, & \binom{E}{2}, & \cdots, & \binom{E}{E}
\end{bmatrix},\qquad
G_D^{2,1} = \frac{1}{2^E} \begin{bmatrix}
    -\binom{E}{1}, & \binom{E}{2}, & \cdots, & (-1)^E\binom{E}{E}
\end{bmatrix}
\]
and thus the generator norm has shrunk 
\[
\|G_D^{j,1}\| = 1 - \frac{1}{2^E}.
\]    
\end{example}
Next we present the more general result.
\begin{lemma}\label{onestepite}
    Let $\cPZ$  be a given polynomial zonotope with dependent part $\cPZ_D = \langle G_D,E \rangle$, with $r$ factors and $h$ generators.
    Assuming cyclical splitting, after splitting $s$ times, we have $2^s$ polynomial zonotopes $\cPZ_{1}^{s},  \cPZ_{2}^{s}, \cdots, \cPZ_{2^s}^{s}$. 
    When $s < r$, the norm cannot increase
    \begin{align*}    
     \Vert G_D^{j,s} \Vert \leq  \Vert G_D\Vert
    \end{align*}
    When $s = r$, the norm decreases by a factor $\rho < 1$,
    \begin{align*}
        \Vert G_D^{j,s} \Vert \leq \rho \Vert G_D \Vert
    \end{align*}
    where $\rho = \max_{i \in \{1,2\cdots,h\}} \bigg (1 - (\frac{1}{2})^{\Vert E(\cdot,i)\Vert } \bigg )$ , $j \in \{1,2,\cdots,2^s \}$ and $G_D^{j,s}$ is the dependent factor generator of $\cPZ_{j}^{s}$
\end{lemma}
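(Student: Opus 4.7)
The plan is to track how a single generator column $G_D(\cdot,i)$ transforms under one split and then compose these transformations over $s$ cyclic splits. Fixing a sign $\epsilon \in \{+1,-1\}$ for the split on factor $\alpha_s$, the binomial expansion
\[
\Bigl(\tfrac{\epsilon(1+\alpha_s)}{2}\Bigr)^{E(s,i)} = \frac{\epsilon^{E(s,i)}}{2^{E(s,i)}}\sum_{j=0}^{E(s,i)} \binom{E(s,i)}{j}\alpha_s^{j}
\]
replaces generator $i$ by $E(s,i)+1$ ``child'' columns, each proportional to $G_D(\cdot,i)$ with a signed coefficient of magnitude $\binom{E(s,i)}{j}/2^{E(s,i)}$, and with the $s$-th entry of the exponent vector replaced by $j$. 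The key arithmetic is that these coefficient magnitudes sum to one, so one split preserves the contribution of generator $i$ to the generator one-norm---provided no child is reclassified as constant.

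Iterating for $s$ cyclic splits on $\alpha_1,\ldots,\alpha_s$, each original column $i$ produces $\prod_{k=1}^{s}(E(k,i)+1)$ children with coefficient magnitudes $\prod_{k=1}^{s}\binom{E(k,i)}{j_k}/2^{E(k,i)}$ and exponent vectors $(j_1,\ldots,j_s, E(s+1,i),\ldots,E(r,i))$. By the triangle inequality and the product-of-sums identity $\prod_{k=1}^{s}\sum_{j_k=0}^{E(k,i)}\binom{E(k,i)}{j_k}/2^{E(k,i)} = 1$, the total norm contributed by the descendants of column $i$ is at most $\|G_D(\cdot,i)\|$. Summing over $i$ immediately yields $\|G_D^{j,s}\| \leq \|G_D\|$ for every $s$, which handles the $s<r$ case.

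For the sharper bound at $s=r$ I would identify which children actually remain in the dependent part. A child is genuinely a dependent generator precisely when its exponent vector is nonzero; otherwise it is a constant vector, which is absorbed into the polynomial zonotope's center (via $\cP_I$) and contributes nothing to $\|G_D^{j,r}\|$. When $s=r$ the exponent vector of a child is just $(j_1,\ldots,j_r)$, so the unique constant child of column $i$ is the one with $j_k=0$ for all $k$, whose coefficient magnitude is $\prod_{k=1}^{r}\binom{E(k,i)}{0}/2^{E(k,i)} = 2^{-\|E(\cdot,i)\|}$. Subtracting this piece from the total leaves $(1-2^{-\|E(\cdot,i)\|})\|G_D(\cdot,i)\| \leq \rho\,\|G_D(\cdot,i)\|$; summing over $i$ gives $\|G_D^{j,r}\| \leq \rho\|G_D\|$ as required.

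The main obstacle will be bookkeeping: carrying the sign factors from the $\pm$ choices at each split through the induction (they affect children's signs but not magnitudes), handling generators with $E(s,i)=0$ uniformly (splitting on $\alpha_s$ leaves them unchanged, consistent with the binomial formula at exponent zero), and cleanly justifying that the all-zero-exponent child really does leave the dependent matrix when $s=r$. None of these is individually subtle, but assembling them into a clean induction on $s$ requires care to ensure the bound holds uniformly across all $2^s$ resulting polynomial zonotopes.
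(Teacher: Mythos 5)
Your proof is correct and follows essentially the same route as the paper: expand the binomial at each split, observe that the magnitudes of the expansion coefficients for column $i$ sum to one (since $\sum_j \binom{E}{j}/2^E = 1$ and products of such sums still equal one), and, at $s=r$, identify the unique all-zero-exponent child as a constant with coefficient $2^{-\|E(\cdot,i)\|}$ which migrates to $\cP_I$, leaving at most $(1-2^{-\|E(\cdot,i)\|})\|G_D(\cdot,i)\|$ per column. If anything, your treatment of the $s<r$ case is slightly more careful than the paper's, which argues via a pointwise bound $|\prod_k ((1+\alpha_k)/2)^{E(k,i)}| \leq 1$; that pointwise bound alone does not in general control the sum of absolute coefficient magnitudes, and one implicitly needs the observation you make explicit, namely that all the expansion coefficients share a sign so the coefficient-magnitude sum equals the value at $\alpha=1$. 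One small phrasing nit: ``one split preserves the contribution of generator $i$\ldots provided no child is reclassified as constant'' should read ``at most preserves''; if a child becomes constant (possible when $s<r$ and $E(k,i)=0$ for all $k>s$) the contribution strictly decreases, which only strengthens the inequality. You should also note, as part of the bookkeeping, that when distinct children happen to share an exponent vector and are merged into a single column, the triangle inequality again only helps the bound.
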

Importantly, the factor $\rho$ does not depend on the number of splits, ; it only depends on the original $\cPZ$.
\begin{proof} 
    When $ s < r$, let us consider the dependent part of $\cPZ_{j}^{s}$, it will have the following form:
    \begin{align} \label{eq:depless}
    \left (\cPZ_{j}^{s} \right )_D&=  \Bigg \{ \sum_{i=1}^{h} 
    A_{s,i}^{j} \zeta_{j}^{i} \prod_{k=1}^{s} \left (\frac{1+\alpha_k}{2} \right)^{E(k,i)}G_D(\cdot,i)~\Bigg | \quad \alpha_k \in [-1,1] \Bigg \}  
    \end{align}
    where 
    $\zeta_j^i \in \{-1,1\}$ distinguishes between the $2^s$ polynomial zonotopes based on which side of each factor was chosen while splitting,
    and $A_{s, i}^{j}$ is the product of factors that have not yet been split: $$A_{s, i}^{j} = \prod_{k=s+1}^{r} \alpha_k^{E(k,i)}$$ 
    Now because $\vert a_k  \vert \leq 1$, we have both $\vert A_{s,i}^{j} \vert \leq 1$ and
    $$
     \quad \Bigg \vert \prod_{k=1}^{s} \left (\frac{1+\alpha_k}{2} \right)^{E(k,i)} \Bigg  \vert \leq 1
    $$
    As a result the absolute value of their product, the value that multiplies $G_D(\cdot, i)$ in Equation~\ref{eq:depless} is also less than $1$. 
    Therefore, when $s < r$ for any $j \in \{1,2,\cdots, 2^s \}$, 
    we have:
    \begin{align*}    
        \Vert G_D^{j,s} \Vert 
        &\leq \sum_{i=1}^{h} \Vert G_D(\cdot,i) \Vert  =  \Vert G_D \Vert 
    \end{align*}
    
    \noindent
    Next, in the other case when  $ s = r$ we can expand the exponent:
    \begin{align*}
        \prod_{k=1}^{r} \left (\frac{1+\alpha_k}{2} \right)^{E(k,i)} &= \sum_{\xi_1,\xi_2,\cdots,\xi_r} c_{\xi_1,\cdots,\xi_k} \prod_{k=1}^{r}\alpha_k^{\xi_r}
    \end{align*}
    where all the coefficients are positive and the first one $c^i_{0,\cdots,0} = \left (\frac{1}{2} \right )^{^{\Vert E(\cdot,i)\Vert}}$. 
    By taking $\alpha_k = 1$ for all $\alpha_k$, we obtain:
    $$
    \sum_{\xi_1,\cdots,\xi_k} c^i_{\xi_1,\cdots,\xi_r} = 1
    $$
    As a result, we have 
    \begin{align*}
    &  \sum_{i=1}^{h} 
    A_{s,i}^{j}  \prod_{k=1}^{s} \left (\frac{1+\alpha_k}{2} \right)^{E(k,i)}G_D(\cdot,i) 
    = \sum_{i=1}^{h} 
    \zeta_j^i \prod_{k=1}^{r} \left (\frac{1+\alpha_k}{2} \right)^{E(k,i)}G_D(\cdot,i) \\
    &= \underbrace{\left (\sum_{i=1}^{h} \zeta_j^i  \Big( c^i_{0,\cdots,0} G_D(\cdot,i) \right )}_{\mbox{ (constant)}} + \left (\sum_{i=1}^{h} \zeta_j^i\sum_{\xi_1,\cdots,\xi_r}^{ \left (\sum_{k=1}^{r}\xi_k \right)\geq 1 }  c_{\xi_1,\cdots,\xi_r}\prod_{k=1}^{r} \left( \alpha_k \right)^{\xi_k} G_D(\cdot,i) \right ) 
    \end{align*}
    In this case, the constant part will not be in the dependent part of $\cPZ_j^r$ but will be moved to the independent part, so that:
    \begin{align*}
    \left (\cPZ_{j}^{r} \right )_D &=  \Bigg \{ \sum_{i=1}^{h} \zeta_j^i\sum_{\xi_1,\cdots,\xi_r}^{ \left (\sum_{k=1}^{r}\xi_k \right)\geq 1 }  c^i_{\xi_1,\cdots,\xi_r}\prod_{k=1}^{r} \left( \alpha_k \right)^{\xi_k} G_D(\cdot,i) \Bigg | ~~ \alpha_k \in [-1,1]
    \Bigg \} 
    \end{align*}
    For any $j \in \{1,2,\cdots, 2^s \}$, we will have:
      \begin{align*}
        \Vert G_D^{j,r} \Vert 
            &= \Bigg \Vert \sum_{i=1}^{h} \zeta_j^i\sum_{\xi_1,\cdots,\xi_r}^{ \left (\sum_{k=1}^{r}\xi_k \right)\geq 1 }  c^i_{\xi_1,\cdots,\xi_r}G_D(\cdot,i) \Bigg \Vert  \\
            &\leq   \sum_{i=1}^{h} \underbrace {\sum_{\xi_1,\cdots,\xi_r}^{ \left (\sum_{k=1}^{r}\xi_k \right)\geq 1 }  c^i_{\xi_1,\cdots,\xi_r}}_{ 1 -  (\frac{1}{2})^{\Vert E(\cdot,i)\Vert}}\Vert G_D(\cdot,i)  \Vert
            \\
        &= \sum_{i=1}^{h}  \Bigg (1 - \bigg (\frac{1}{2} \bigg)^{\Vert E(\cdot,i)\Vert} \Bigg ) \Vert G_D(\cdot,i) \Vert  \\
        & \leq \rho \sum_{i=1}^{h} \Vert G_D(\cdot,i) \Vert = \rho \Vert G_D \Vert
    \end{align*} \qed
\end{proof}
\begin{corollary}\label{lstepite}
 Let $\cPZ$  be a given polynomial zonotope with dependent part $\cPZ_D = \langle G_D, E \rangle$. 
 Using cyclical splitting, after splitting $s$ times, the factor with index $ 1 + \big ((s- 1) (\mbox{mod } r) \big )$ will be split. 
 Let $\cPZ_{1}^{s} ,  \cPZ_{2}^{s} ,\cdots, \cPZ_{2^s}^{s}  $ be the split polynomial zonotope after $s$ iterations. Then for any $ 0 < j \leq 2^s$
    \begin{align*}    
    \Vert G_D^{j,s} \Vert \leq  \rho^{\lfloor s/r \rfloor}  \Vert G_D \Vert
    \end{align*}   
    where  $G_D^{j,s}$ is the dependent factor generator matrix of $\cPZ_{j}^{s}$.
\end{corollary}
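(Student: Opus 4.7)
The plan is induction on the number of complete cycles $m = \lfloor s/r \rfloor$, with Lemma~\ref{onestepite} applied once per cycle. I first handle the exact case $s = mr$. The base case $m = 0$ is trivial since $\rho^0 = 1$. For the inductive step, assume the bound for $m - 1$. After the first $r$ cyclic splits, the second part of Lemma~\ref{onestepite} gives $\Vert G_D^{k,r} \Vert \leq \rho \Vert G_D \Vert$ for each of the $2^r$ pieces $\cPZ_k^r$. Each $\cPZ_k^r$ still has $r$ factors (cyclic splitting only changes the exponents, not the factor count), so I would apply the inductive hypothesis to each $\cPZ_k^r$ for the remaining $(m-1)r$ cyclic splits, giving final pieces of norm at most $\rho_k^{m-1} \Vert G_D^{k,r} \Vert$, where $\rho_k$ is the $\rho$-constant defined from $\cPZ_k^r$'s own exponent matrix $E^{k,r}$.

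The main obstacle is showing $\rho_k \leq \rho$, so that the recursion can be folded using only the original $\rho$. For this I would lean on the explicit form of $(\cPZ_j^r)_D$ established inside the proof of Lemma~\ref{onestepite}: every column of $E^{k,r}$ is an exponent tuple $(\xi_1, \ldots, \xi_r)$ arising in the expansion of $\prod_\ell ((1+\alpha_\ell)/2)^{E(\ell,i)}$ for some original term index $i$, with $\xi_\ell \leq E(\ell, i)$, so $\sum_\ell \xi_\ell \leq \Vert E(\cdot, i) \Vert \leq \max_{i'} \Vert E(\cdot, i') \Vert$. Since $n \mapsto 1 - 2^{-n}$ is monotone increasing, this column-norm bound yields $\rho_k \leq \rho$. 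Chaining, $\Vert G_D^{j,mr} \Vert \leq \rho_k^{m-1} \Vert G_D^{k,r} \Vert \leq \rho^{m-1} \cdot \rho \Vert G_D \Vert = \rho^m \Vert G_D \Vert$, closing the induction on $m$.

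Finally, for general $s = mr + t$ with $0 \leq t < r$, I would apply the multiple-of-$r$ result to bound the norm by $\rho^m \Vert G_D \Vert$ after the first $mr$ splits, and then invoke the first part of Lemma~\ref{onestepite} inside each of the $2^{mr}$ resulting sub-polynomial-zonotopes. Since the remaining $t < r$ cyclic splits form an incomplete cycle inside each sub-PZ, that first part certifies the norm cannot further increase, yielding $\Vert G_D^{j,s} \Vert \leq \rho^m \Vert G_D \Vert = \rho^{\lfloor s/r \rfloor} \Vert G_D \Vert$, as claimed. The only nontrivial step in this plan is the comparison $\rho_k \leq \rho$; everything else is direct bookkeeping on top of Lemma~\ref{onestepite}.
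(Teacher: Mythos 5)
Your proof is correct and follows the approach the paper implies (the paper states the corollary without an explicit proof, leaving iteration of Lemma~\ref{onestepite} implicit). You rightly identify the one nontrivial point that needs verification: that the $\rho$-constant of each child $\cPZ_k^r$ satisfies $\rho_k \leq \rho$, which follows since every exponent tuple $(\xi_1,\dots,\xi_r)$ appearing in $(\cPZ_j^r)_D$ satisfies $\xi_\ell \leq E(\ell,i)$ for some original column $i$, so the new column 1-norms never exceed $\max_{i'}\Vert E(\cdot,i')\Vert$ and $n\mapsto 1-2^{-n}$ is increasing; handling the remainder $t < r$ with the ``norm cannot increase'' part of the lemma then closes the argument cleanly.
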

With this, we can now show that the union of the zonotope overapproximations converges to the original polynomial zonotope.
  %
  %
  \begin{theorem}\label{convergence_cyc}
         Let $\cPZ$  be a given polynomial zonotope with dependent part $\cPZ_D = \langle G_D,E \rangle$. 
         Using cyclical splitting, after splitting $s$ times, the factor with index $ 1 + \big ((s- 1) (\mbox{mod } r) \big )$ will be split. 
         %
         The corresponding split polynomial zonotopes are $\cPZ_1^{s},\cPZ_2^{s},\cdots,\cPZ_{2^{s}}^{s}$ and zonotope overapproximation for $\cPZ_j^{s}$ is $\cP_j^{s}$.
         As we split more often, the overapproximation error converges to zero:
    \begin{align*}
       \lim_{s \to \infty} d \bigg(\cPZ,\bigcup_{j = 1}^{2^{s}}\cP_j^{s} \bigg) = 0
   \end{align*}  
  \end{theorem}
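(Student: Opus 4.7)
The plan is to chain Proposition~\ref{Split}, Lemma~\ref{relax}, and Corollary~\ref{lstepite}. First I would observe that iterating Proposition~\ref{Split} gives $\cPZ = \bigcup_{j=1}^{2^{s}} \cPZ_j^{s}$, and since each split piece is contained in its own zonotope overapproximation we have the nested inclusion $\cPZ \subseteq \bigcup_j \cP_j^{s}$. This collapses the Hausdorff distance to its one-sided form:
\[
d\bigl(\cPZ,\textstyle\bigcup_j \cP_j^{s}\bigr) = \sup_{y\in\bigcup_j \cP_j^{s}} \inf_{x\in\cPZ}\|x-y\|.
\]

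Next, for any $y$ in the union of overapproximations I would pick an index $j$ with $y\in\cP_j^{s}$. Applying Lemma~\ref{relax} to the pair $(\cPZ_j^{s},\cP_j^{s})$ yields some $x\in\cPZ_j^{s}\subseteq\cPZ$ with $\|x-y\|\leq\|G_D^{j,s}\|$, and Corollary~\ref{lstepite} in turn bounds this by $\rho^{\lfloor s/r\rfloor}\|G_D\|$. Since this bound is uniform in $y$ and in $j$, taking the supremum yields
\[
d\bigl(\cPZ,\textstyle\bigcup_j \cP_j^{s}\bigr) \;\leq\; \rho^{\lfloor s/r\rfloor}\,\|G_D\|.
\]
As $\rho<1$, the right-hand side tends to zero as $s\to\infty$, and the theorem follows by a squeeze argument (the distance is nonnegative and bounded above by a null sequence).

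The main obstacle is not technical but rather making explicit the assumption that $\rho<1$ strictly. This requires every column of the exponent matrix $E$ to contain at least one positive entry, so that $\|E(\cdot,i)\|\geq 1$ and $1-(1/2)^{\|E(\cdot,i)\|}<1$ for each $i$. An all-zero column of $E$ would correspond to a constant term that ought to be absorbed into the center (or the independent part) rather than the dependent part, so this assumption is harmless and implicit in the sparse representation. Once that observation is in place, the proof is essentially bookkeeping: the contraction per full cycle of $r$ splits established in Lemma~\ref{onestepite} compounds geometrically, and Lemma~\ref{relax} translates the shrinking dependent-generator norm into a shrinking Hausdorff error.
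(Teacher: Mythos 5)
Your proposal follows exactly the same route as the paper's proof: collapse the Hausdorff distance to its one-sided form via the nested inclusion $\cPZ\subseteq\bigcup_j\cP_j^{s}$, bound $\sup_{y}\inf_{x}$ piecewise using Lemma~\ref{relax} on each pair $(\cPZ_j^{s},\cP_j^{s})$, invoke Corollary~\ref{lstepite} for the geometric decay $\rho^{\lfloor s/r\rfloor}\|G_D\|$, and conclude from $\rho<1$. The only difference is cosmetic (you argue pointwise in $y$ where the paper writes a $\max_j$), so the argument is essentially identical.

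One small correction to the caveat you flagged as the ``main obstacle'': the inequality $1-(1/2)^{\|E(\cdot,i)\|}<1$ holds for \emph{every} nonnegative integer $\|E(\cdot,i)\|$, including zero, because $(1/2)^{k}>0$ for all $k\geq 0$. So $\rho<1$ is automatic and does not require columns of $E$ to have a positive entry. The only degenerate case is $\rho=0$ (all columns of $E$ identically zero), but then $\rho^{\lfloor s/r\rfloor}\to 0$ trivially and convergence still holds, so no extra hypothesis is needed.
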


\begin{proof}
        Since $\cPZ \subseteq \bigcup_{j = 1}^{2^{s}}\cP_j^{s}$,
      \begin{align*}
      d(\cPZ,\bigcup_{j = 1}^{2^{s}}\cP_j^{s}) 
      &= \sup_{y \in \bigcup_{j = 1}^{2^{s}}\cP_j^{s}} \inf_{x \in \cPZ} \Vert x - y \Vert \\
      & = \max_{j} \sup_{y \in \cP_j^{s}} \inf_{x \in 
      \cPZ} \Vert x - y \Vert \\
      & \leq \max_{j} d(\cP_j^{s},\cPZ_{j}^{s}) \\
      & \leq \rho^{\lfloor s / r \rfloor} \Vert G_D \Vert \quad \mbox{(by Lemma \ref{onestepite} and Corollary \ref{relax}).}\\ 
      \end{align*}
      Since $0 < \rho < 1$, the value of limit of $\rho^{\lfloor s / r \rfloor}$ converges to zero. \qed
\end{proof}

Although we have been assuming cyclical variables splitting, the above theorem could be adapted to more general splitting schemes by noting that the generator matrix norm must reduce after every factor has been selected at least once.
As long as the spitting approach is fair, in the sense that it does not ignore any factors forever, the dependent generator matrix norm will  decrease by a factor of $\rho$ after each full round.
Applied repeatedly, the norm of the dependent matrix will therefore  decrease log-linearly to 0, in terms of the number of rounds. 


\vspace{1em}
Lastly, in Figure~\ref{fig:split_comparison}, we motivated our work with a practical example where the overapproximate and split algorithm did not appear to converge to the true polynomial zonotope.
Based on our results in Theorem~\ref{convergence_cyc}, we know that convergence is guaranteed, but you may need to split along each dependent factor.
In the polynomial zonotope in the figure, the number of dependent factors was around 50, so even after 40 splits the overapproximation error can remain large.

\section{Related Work}
Polynomial zonotopes were originally designed to represent non-convex sets to tightly enclose the reachable sets for a nonlinear system~\cite{althoff2013reachability}. 
A sparse version of the representation was proposed in follow up work \cite{kochdumper2020sparse,kochdumper2022extensions} to support a more compact representation while still being closed under key nonlinear operations.
Recent extensions add linear constraints to the domain which are called constrained polynomial zonotopes~\cite{kochdumper2020constrained}.
Although this was not the focus of the current paper, the intersection and plotting algorithms for constrained polynomial zonotopes is basically the same as for polynomial zonotopes, except the overapproximation step results in constrained zonotopes~\cite{scott2016constrained} (also called star sets~\cite{duggirala2016parsimonious} or $\mathcal{AH}$-Polytopes~\cite{sadraddini2019linear}) rather than zonotopes. 
Therefore, we expect the analysis results from this paper to also be transferable to constrained zonotopes.

Besides reachability analysis of nonlinear systems, polynomial zonotopes have been used for reachability of linear systems with uncertain parameters~\cite{Luo2023} which resulted in more accurate reachable sets comparing to zonotope methods.
The representation has also been used for set-based propagation through neural networks~\cite{kochdumper2023nfm},
real-time planning and control scenarios~\cite{michaux2023can}
and safety shielding for reinforcement learning systems~\cite{kochdumper2023provably}.

In cases where the model of the dynamical system is not given, polynomial zonotopes can also be used for reachability with Koopman linearized surrogate models obtained from trajectory data \cite{bak2022reachability}. 
Since Koopman linearization requires lifting the state through a nonlinear transformation, convex initial sets in the original space can become complex non-convex sets.
Polynomial zonotopes can provide tight enclosures of these lifted initial sets. 
%

Taylor models \cite{makino2003taylor} are a related set representation sometimes used for reachability analysis~\cite{chen2013flow} that are similar to polynomial zonotopes with interval remainders added to each variable.
Taylor model arithmetic allows one to approximate arbitrary smooth functions, although the intersection and plotting algorithms are essentially grid pavings over the domain of the set.

As mentioned in the introduction, polynomial zonotope intersection checking is equivalent to the box-constrained polynomial optimization problem.
There are several methods to solve such problems, for example augmented Lagrangian methods or sum of squares programming\cite{shor1987class,shor1987class,parrilo2003semidefinite,parrilo2000structured,lasserre2001global}.
Augmented Lagrangian methods consider box constrained polynomial constraint problems as a general nonlinear programming problem. 
General nonlinear programming with convex constraints can usually be solved by considering the KKT conditions \cite{bazaraa2013nonlinear,boyd2004convex,rockafellar1997convex}. 
The KKT points can be found by augmented Lagrangian methods\cite{hestenes1969multiplier,powell1969method}.
The exact augmented Lagrangian methods must solve a subproblem in each update. 
Hence the inexact augmented Lagrangian methods(iALM) are used in practice. 
Although there are many works on iALM~\cite{li2021rate,li2021augmented}, such methods guarantee local convergence with local optimal solutions. Therefore, they are not commonly used for polynomial optimization problems.
Sum-of-squares polynomials are polynomials that can be formulated as the sum of the square of several polynomials. 
If the polynomial can be formulated in this way, or reformulated after a series of liftings \cite{lasserre2007sum}, then the optimization problem can be formulated as semidefinite programming and solved using convex optimization (although the resulting problem may be very large). 
%

\section{Conclusions} 

In this work we discussed the difficulty of the fundamental intersection checking operation for the polynomial zonotope set representation.
This difficulty is rarely directly addressed in papers that use polynomial zonotopes, although it can be a practical limitation of any algorithm that builds upon the set representation.
The complexity is both theoretical and practically relevant, as we have shown cases, specifically Figure~\ref{fig:split_comparison}, where accurate approximation using the overapproximate and split approach is intractable.
While polynomial zonotopes are a powerful tool for formal verification, they are not a panacea, as much of the problem complexity can be often hidden within the representation itself, manifesting when performing set intersections.

\bibliographystyle{splncs04}
\bibliography{refs}
\end{document}